\def\E{\mathbb{E}}
\def\re{\mathtt{Re}}
\def\ie{{\em i.e.}}
\def\Z{\mathbb{Z}}
\def\sinc{\operatorname{sinc}}
\def\d{\mathrm{d}}
\def\l{\ell}
\newtheorem{lemma}{Lemma}{}
 \def\LPF{\operatorname{LPF}}
\def\conv{\otimes}
\title{A Linearization Technique for Self-Interference Cancellation in Full-Duplex Radios }
\author{\IEEEauthorblockN{Arjun Nadh,  Joseph Samuel, Ankit Sharma, S. Aniruddhan,  Radha Krishna Ganti}\\
\IEEEauthorblockA{Department of Electrical Engineering\\
Indian Institute of Technology, Madras\\
Chennai, India 600036\\
\{arjunnadh, jsamuel, ankitsharma, ani, rganti\}@ee.iitm.ac.in}
\thanks{A part of this work has been presented in NCC 2016. \cite{ncc16}. }
}
\def\antenna{%
    -- +(0mm,4.0mm) -- +(2.625mm,7.5mm) -- +(-2.625mm,7.5mm) -- +(0mm,4.0mm)
}
\tikzstyle{box1} = [rectangle, minimum width=3cm, minimum height=1cm, text centered, draw=black, fill=orange!30]
\tikzstyle{box2} = [rectangle, minimum width=1cm, minimum height=1cm, text centered, draw=black, fill=blue!30]
\tikzstyle{circ1} = [circle,   radius=3cm,  text centered, draw=black, fill=orange!30]
\tikzstyle{circ2} = [circle,   radius=3cm,  text centered, draw=black, fill=blue!30]
\tikzstyle{triag1} = [ regular polygon, regular polygon sides=3, inner sep=1.5pt,draw = black, fill=blue!30]
\tikzstyle{output} = [coordinate]
\tikzstyle{arrowdouble} = [thick,->,>=stealth,double]
\tikzstyle{arrow} = [thick,->,>=stealth]
\tikzstyle{null} = [ text centered]
\begin{document}

\maketitle
\begin{abstract}
The fundamental problem in the design of a full-duplex radio is the cancellation of the self-interference (SI) signal generated by the transmitter.
 Current techniques for suppressing SI rely on generating a copy of the SI signal and subtracting it partly in the RF (radio frequency) and digital domains.   A critical step in replicating the  self-interference is the estimation of the  multi-path channel  through which the transmitted signal propagates to the antenna. Since there is no prior model on the number of multipath reflections, current techniques assume a tap delay line filter (in the RF and digital domain) with a large number of taps, and estimate the taps in the analog and the digital domain. 

In this paper,  using a linearization technique, we show that the self-interference channel in an indoor environment  can be effectively modelled  as $H(f)=C_0 + C_1f$ in the frequency domain. Thus, the effective self-interference channel can be represented by two parameters $C_0$ and $C_1$, irrespective of the multipath environment.  We also provide experimental evidence to verify the above channel model and propose novel low-complexity designs for self-interference cancellation.

  \end{abstract}
\section{Introduction}
 Full-duplex wireless communication involves transmitting and receiving at the same time and at the same frequency. An ideal full-duplex communication system has double the usable bandwidth  in a  bi-directional link. However, self-interference is a major impediment in realizing a full-duplex system as the much stronger transmit signal drowns the received signal and in the process saturates the receiver chain.   In the past few years, there has been renewed efforts  in  building an ideal full-duplex system \cite{FD:5G_beyond, asu:14, kim_2015}.  
\subsection{Origin of self-interference}
The signal coming into the receiver of a full duplex system will have not just the intended received signal but also many copies of the transmitted signal coming through various paths. These paths arise because of the following reasons:
\begin{enumerate}
\item Transmitter and receiver coupling at the antenna. 
\label{coupling_at_antenna}
\item Leakage of the transmit signal from the power amplifier to the receiver. This can happen via the printed circuit board (PCB) substrate (for a system implemented on a board) or the silicon substrate (for a system implemented on an integrated circuit).
\label{leakage_thru_substrate}
\item Reflections of the transmitted signal from the external environment picked up by the antenna.  These reflections are typically attenuated due to path-loss and therefore the strength of the reflected signal depends on the distance to  the reflector.
\label{wireless_reflections}
\end{enumerate}
While the self-interference signal components arising out of \ref{coupling_at_antenna}) and \ref{leakage_thru_substrate}) can be estimated a priori through calibration, the characteristics of the self-interference paths caused \ref{wireless_reflections}) are random and depend on the geometry of the reflectors. 

In current approaches, the self-interference is cancelled at multiple stages beginning with the RF stage followed by cancellation in the baseband analog and digital domains. We will now briefly review some of the existing self-interference cancellation techniques.
\subsection{RF cancellation techniques}
In a two antenna system, antenna separation is a simple way of providing passive isolation between the transmit and the receive chains \cite{fd:sab_14, fd:par_07}. The isolation depends on the separation distance between antenna, orientation and polarization. In general, the isolation can be upwards of 40 dB.  However, in this paper we focus on shared antenna architecture wherein a single antenna is used for both transmission and reception.

In  \cite{FD:rice,FD:duarte2010full}  the entire transmit chain is replicated for generating a duplicate copy  of the self-interference signal for cancellation.  However, the additional RF chain introduces  noise and canceling the non-linearities introduced by the PA in the transmit path is much more difficult.    A common technique for RF self-interference cancellation in shared antenna architectures is to use a multitap RF filter with fixed delay lines and tunable gains. In \cite{FD:bharadia2013full}, sixteen RF delay  lines with variable gains were used to filter the known RF signal.  The delays (in the range of 400 ps to 1.4 ns) were permanently tuned to the strongest self-interference paths through the PCB and the antenna.  An RF cancellation of about 60 dB is reported (in conjunction with a circulator).  In \cite{DBLP:journals/corr/HuusariCLKTV15},  a three-path  RF filter using vector modulators is implemented. Also, the control logic of the vector modulators is implemented in the analog domain and an RF cancellation of  60 dB for 20 MHz is reported.  This technique of implementing an RF filter (tap-delay line filter) has the following disadvantages:
  \begin{enumerate}
 \item It requires multiple delay lines, and achieving delays at high frequencies in the RF domain  is extremely difficult. 
 \item It is difficult to realize such a multi-tap RF filter in a small form factor. 
 \item The RF lines have to be carefully tuned a priori and the resulting circuit might not work effectively if the reflectors and the self-interference paths change substantially. 
 \end{enumerate}

 A 35-40 dB passive isolation was reported in \cite{FD:kumar2014directional, FD:knox2012single} using hybrid transformers and a vector modulator.  An electrical balance tunable  RF network was used in \cite{FB:van2014rf} to provide an isolation of greater than 50 dB in the RF domain with a combination of active and passive techniques. 

In summary, the passive isolation techniques (separation of antenna, circulator) can provide about  40 dB isolation \cite{FD:knox2012single}, the active cancellation techniques provide about 35-40 dB isolation providing the total RF isolation to be about 65-70 dB for 20 MHz bandwidth.

\subsection{Baseband analog cancellation}
In \cite{FD:kaufman2013analog}, an analog cancellation technique is proposed in conjunction with an antenna design. However, the gains of the analog cancellation in isolation  are not clear. Analog cancellation techniques while recognized as important, are not common because of the restricted  access to the baseband signals in commercial off-the-shelf (COTS) equipment. 

\subsection{Baseband digital cancellation}
Digital baseband cancellation consists of removing the residual self-interference after RF and baseband analog cancellation.
In \cite{FD:bharadia2013full, FD:duarte2010full}, the SI channel  is estimated \cite{FD:bharadia2013full} using a least-squares technique and the SI is cancelled using the estimated channel and the known transmitted signal. However, these techniques incur significant complexity since the entire channel (with unknown number of taps) has to be estimated constantly  to track the channel changes due to the varying reflections. The importance of removing the non-linear components of the signal are highlighted in \cite{FD:bharadia2013full} and  $45$ dB digital cancellation was reported. Other implementations \cite{FD:duarte2010full} have  reported about $30$ dB  cancellation.  In \cite{FD:day2012full}, it has been shown that the limited dynamic range of the analog-to-digital conversion is a bottleneck in effective cancellation of self-interference in the digital domain. In \cite{FD:choi2013simultaneous}, the system level performance of full-duplex radios is presented. In all these digital techniques, no prior model of the filter (for the linear components) is used leading to a higher implementation complexity. Digital cancellation leads to about 35-40 dB of self-interference suppression for most of these designs.

\subsection{Our contribution}
All   current implementations of self-interference cancellation utilize some variable  delay elements to match the known transmit signal with the interference signal. These delay elements might be variable length delay-lines in the RF domain or instead utilize  advanced baseband signal processing to compensate for the delay. The current implementations assume no prior model of the self-interference channel and hence incur large complexity in terms of channel estimation either in the RF domain (circuitry) or in the digital domain  (tune a large number of variables). 
\begin{itemize}
\item In this paper, we introduce a new technique of self-interference cancellation based on linearization of the delayed signal.  In the simplest case, the linearization technique leads to a two-parameter channel model that leads to a simpler implementation both in analog (RF) and digital domains.   
\item No prior model of the channel is assumed in terms of number of taps or the dominant paths. Hence, the  proposed method is robust to changes in the environment, \ie,  it can adapt to varying reflector profiles (RF multipath).
\item  It is experimentally verified that the  proposed technique leads to almost 75 dB of self-interference cancellation for 20 MHz signal (without cancellation of non-linear components). 
\end{itemize}
 
In Section \ref{sec:intro}, a generic model for self-interference based on the transmitted signal in full-duplex nodes is provided. In Section \ref{sec:Sec_BasicIdea}, the basic idea of linearization  of self-interference based on Taylor series approximation is introduced. In Section \ref{sec:model}, analog and digital self-interference cancellation architectures based on the linearization technique are presented. The experimental results are presented in Section \ref{sec:exp} and the paper is concluded in Section \ref{sec:con}.

\section{Modeling of self-interference}
\label{sec:intro}
 We now begin with the description of the self-interference problem in full-duplex systems assuming an ideal baseband to RF conversion in the transmit chain and an ideal downconversion in the receive chain. The transmitted RF signal $y(t)$ equals
 \[y(t) = \sqrt{G_t}\re[x(t) e^{j 2\pi f_c t}],\]
 where $x(t)$ is the complex baseband signal generated by the digital-to-analog converter and $f_c$ is the RF transmission frequency. In practice, the bandwidth  $W$ of the baseband signal is much less than the transmit frequency $f_c$. For example, in an 802.11a system $W=5$ MHz and $f_c = 2.5$ GHz. Hence, we assume $f_c \gg W$.   We also assume a direct conversion receiver, which is the most prevalent receiver architecture in current wireless systems.
 Similarly,  we assume an ideal  RF chain in the reciever which implies that the downconversion is equivalent to 
 \[ r(t)= \LPF(z(t)e^{-j 2 \pi f_ct}),\]
 where $z(t)$ is the signal at the input to the low noise amplifier (LNA), and  $r(t)$ is the received baseband signal. In the above equation $\LPF ()$ is the low pass filtering operation in the downconversion chain.  The baseband  signal $x(t)$, in a single-carrier system,  is  modelled as 
\begin{align}
x(t)=   \sum_{n=-\infty}^{\infty} s_n g \left(\frac{t-nT}{T}\right),
\label{eq:bb1}
\end{align}
where $s_n$ denote the information symbols, $g(t)$ is the pulse shaping filter and $T$ is the symbol duration (also known as signaling time). Commonly used pulse shaping filters are root-raised-cosine and sinc pulses.  A similar model for the baseband signal in an OFDM system is provided in  \cite{chiueh2012baseband}. We also have  $T\approx  W^{-1}$, and in this paper,  we assume an equality\footnote{In particular, based on the modulation,   an appropriate scaling constant  has to be introduced. However, this is of not much relevance in the current paper. }.  We also assume  $\E|x(t)|^2 =1$. 
 

%
%
 

 The self-interference seen at the  antenna will be a sum of delayed and attenuated copies of $y(t)$. 
More precisely, the self-interference at the input to the mixer (before downconversion) can be   modelled as 
\begin{align}
I(t) &= \sum_{k=1}^M a_k y(t-\tau_k),\nonumber\\
&=\sum_{k=1}^M a_k \sqrt{G_t}\re[ x(t-\tau_k) e^{j 2\pi f_c( t-\tau_k)}].
\label{eq:1}
\end{align}
where $a_k$ represent the attenuation of the path $k$ and $\tau_k$ denotes the delay in the path $k$.  

  In the model \eqref{eq:1} for $I(t)$,  for simplicity,  we  consider only $M$ significant paths or reflections. Observe that $M$, $a_i$ and $\tau_i$, $1\leq i\leq M$, are unknown and depend on the geometry of the reflectors.  In reality, there might be a large number of reflections. However, as mentioned before, most of these reflections might be insignificant due path loss.  Without loss of generality, we will assume that $a_1\geq a_2\geq \hdots \geq a_M$. 

If $c$ is the speed of the light, the round trip distance for a path $k$ is $c\tau_k$. Assuming a path-loss function $\l(x)$, we have  $a_k = \sqrt{\l\left( c\tau_k \right)}$.
\begin{figure}[ht]
\begin{center}
\includegraphics[width=0.95\columnwidth]{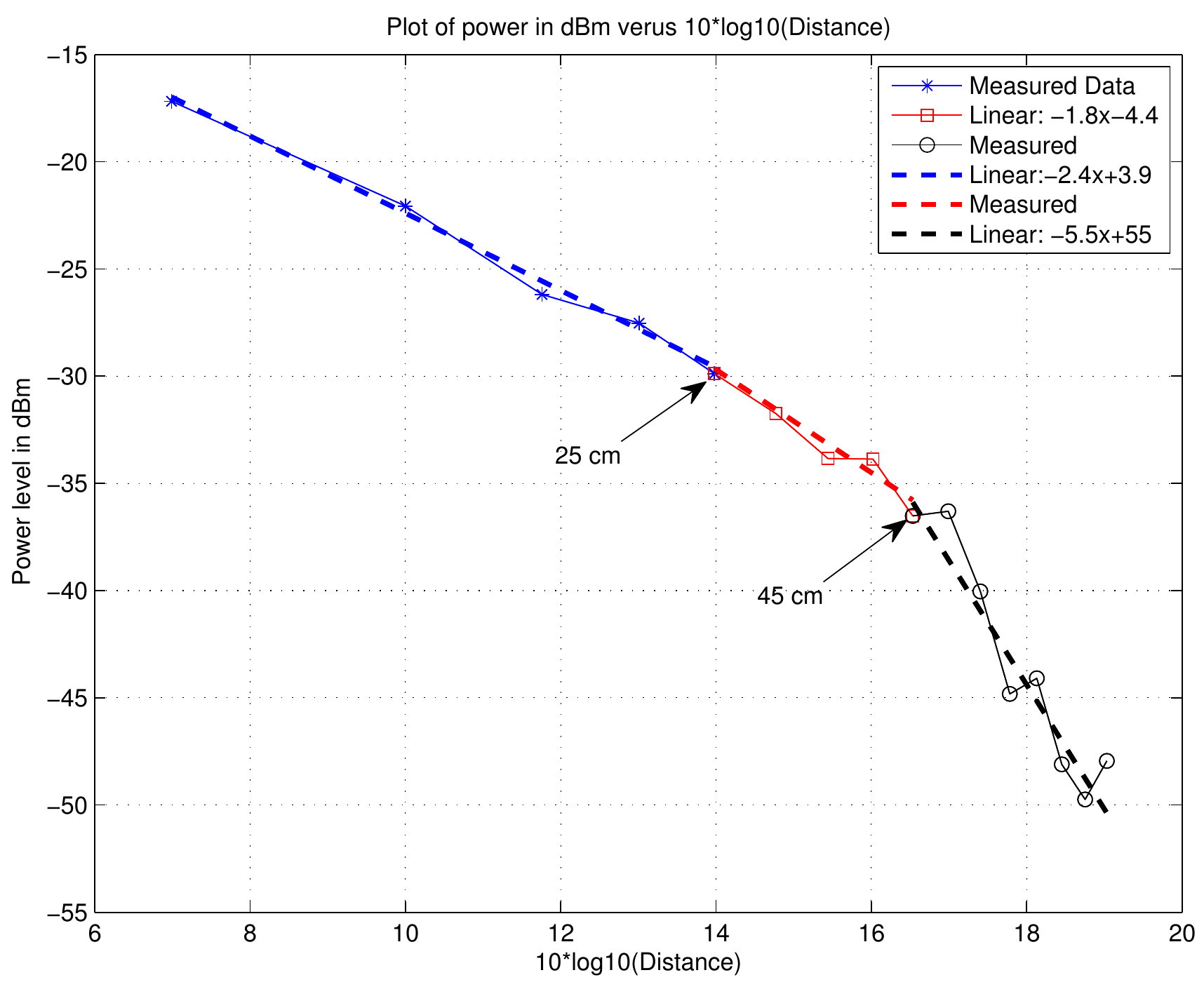}
\end{center}
\caption{Path loss as a function of distance (in cm) obtained by measurements at 2.5 GHz with an isotropic antenna.}
\label{fig:1}
\end{figure}
A common path loss function is $\l(x) =\min\{\delta, k\|x\|^{-\alpha}\}$, where $\alpha >2$, and $\delta$ and $k$ are constants with appropriate dimensions.  A measured path loss model  at 2.5 GHz is shown in Figure \ref{fig:1}. We observe that for a reflector at 12.5 cm, the path loss is about 30 dB. The time delay $\tau$ for this distance is about 830 ps. 
In general, the values of $\tau_k$ which correspond to significant powers in the self-intereference signal are small compared to the signalling time, \ie,  $\tau_k \ll T$. This is because the paths corresponding to smaller values of $\tau_k$ will have smaller path losses.  

The self-interference $I(t)$ can be rewritten as
\begin{align}
I(t) = y(t)\conv h_r(t),
\end{align}
where $h_r(t) = \sum_{k=1}^M a_k \delta(t-\tau_k)$ is an RF filter. In indoor environments,  looking at the values of $\tau_k$ that contribute significantly to the self-interference, we can observe that the filter $h_r(t)$ has a bandwidth of few tens of gigahertz. This is in sharp contrast to the standard channel estimation problem in digital communications where the filter bandwidth is of the order of the bandwidth of the signal. 
At the node, the transmitted RF signal $y(t)$  and the baseband signal $x(t)$ are known. However, the channel $h_r(t)$ is unknown making the self-interference cancellation non-trivial. 

  \section{Basic idea: Linearization of the self-interference signal}
\label{sec:Sec_BasicIdea}
We now introduce the technique of linearization of the self-interference signal that would form the basis of our new cancellation technique. 
\subsection{First order approximation}
  We use a first order Taylor series approximation for the delayed signal $x(t-\tau_k)$ to obtain 
\begin{align}
x(t- \tau_k) = x(t) -x'(t) \tau_k +E(\tau_k,t),
\label{tay1}
\end{align}
where $E(\tau_k,t)$ is the residual error, and $x'(t)=\frac{\d x(t)}{\d t}$ is the derivative of the baseband signal. The error term decreases with decreasing $\tau_k$ improving the approximation. We will now express the self-interference signal as a function of the baseband signal $x(t)$ and its derivatives. From \eqref{eq:1}, we have 
\begin{align*}
I(t)&=\sqrt{G_t}\sum_{k=1}^M a_k \re[ x(t-\tau_k) e^{j 2\pi f_c( t-\tau_k)}]\\
    &\stackrel{(a)}{=} \sqrt{G_t}\sum_{k=1}^M a_k \re[ ( x(t) - x'(t) \tau_k +E(\tau_k,t)) e^{j 2\pi f_c( t-\tau_k)}]\\
&=\sqrt{G_t} \re \Big[\sum_{k=1}^M a_k x(t) e^{j 2\pi f_c t} e^{-j 2\pi f_c \tau_k} \\
&-  \sum_{k=1}^M a_k x'(t) \tau_k e^{j 2\pi f_c t} e^{-j 2\pi f_c \tau_k} \Big] +E_2(t)\\
  &=\sqrt{G_t} \re\Big [ x(t)e^{j2\pi f_c t}C_0 - x'(t)e^{j2\pi f_c t}C_1 \Big]+E_2(t),
\end{align*}	
where \[C_0= \sum_{k=1}^Ma_k e^{-j2\pi f_c\tau_k},\] and \[C_1= \sum_{k=1}^Ma_k\tau_k e^{-j2\pi f_c\tau_k}.\] Here  $(a)$ follows from \eqref{tay1} and $E_2(t)$ is the error term given by
 \begin{align}
E_2(t)=\sqrt{G_t} \sum_{k=1}^M a_k \re[ E(\tau_k,t) e^{j 2\pi f_c( t-\tau_k)}].
\end{align}
In the above expression, observe that $C_0$ and $C_1$ are complex numbers that are constant (for a given geometry of backscatters)   and depend only  on the path-loss and the path delays. 
Hence the self-interference can be  expressed as 
\begin{align}
I(t) = I_s(t)-I_{d}(t)+ E_2(t),
\label{eqn:Eqn_SelfInterf}
\end{align}
where $I_s(t)$ $=$  $\re[C_0  x(t)e^{j2\pi f_c t} ]$,  represents the self-interference because of
the signal and $I_d(t)$ $=$ $\re[C_1 x'(t) e^{j2\pi f_c t}]$
represents the self-interference because of the derivative term. If the error term   $E_2(t)$ is small, canceling $I(t)$ would amount to canceling the interference because of $I_s(t)$  and $I_d(t)$ which only depend on the signal $x(t)$ and its derivative $x'(t)$.
In other words, we do not require to construct many different delayed copies of $x(t)$ to cancel the self-interference signal -- we just need the signal $x(t)$ and its derivative $x'(t)$. The signal $x(t)$ is readily available in the analog domain since that is the signal being transmitted. The derivative $x'(t)$ can easily be generated in the analog or digital domain (which will be explained in the next Section).  Neglecting $E_2(t)$, the channel can now be modelled as $H(f)=C_0 + C_1 f$. The effect of the entire channel is now captured by just two complex numbers $C_0$ and $C_1$, thus providing a huge dimensionality reduction -- the number of parameters to estimate is only two irrespective of the nature and number of multipaths.

The Taylor series is a good approximation for paths arising with low $\tau_k$ and the error is large for paths with large $\tau_k$. However, paths with large $\tau_k$ arise from far-off  reflectors and hence the path loss for these paths is large, and therefore the absolute error contribution  by these paths is low. Hence the overall approximation of $I_t$  by $I_s$ and $I_d$ is reasonable for signals with low bandwidth. We will make this observation precise in the next Lemma by bounding the power of the error.
\begin{lemma} 
\label{lem:der}
For the single-carrier baseband transmission  with sinc pulses and $\E|s_k|^2$=1 (average energy of the signal constellation is unity),  the average power of the  error term in the Taylor series is given by
\[\E[|a_kE(\tau_k, t)|^2]  \leq  (0.075)a_k ^2(  \tau_k/T)^4 .\]
\end{lemma}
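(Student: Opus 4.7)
The error term is nothing but the quadratic Taylor remainder, so by the integral form of Taylor's theorem,
\[ E(\tau_k,t) \;=\; x(t-\tau_k) - x(t) + \tau_k x'(t) \;=\; \int_0^{\tau_k}(\tau_k - s)\, x''(t-s)\, ds. \]
Equivalently, in the frequency domain, the map $x \mapsto E(\tau_k,\cdot)$ is the filter with transfer function $H_{\tau_k}(f) = e^{-j2\pi f \tau_k} - 1 + j 2\pi f \tau_k$. My plan is to work in the frequency domain and exploit the fact that sinc pulse shaping makes $x(t)$ bandlimited to $|f| \le 1/(2T)$, so that the promised $\tau_k^4/T^4$ scaling emerges from a finite, explicitly computable spectral integral.

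The first step is the elementary inequality $|H_{\tau_k}(f)| \le (2\pi f \tau_k)^2/2$, which follows from taking absolute values inside the integral representation $e^{-j\theta} - 1 + j\theta = -\int_0^\theta (\theta - s) e^{-js}\,ds$. Next, since the data symbols $\{s_n\}$ are i.i.d.\ with $\E|s_n|^2 = 1$ and $g$ is a sinc, the process $x(t)$ is wide-sense cyclostationary with time-averaged power spectral density $\bar S_x(f) = T$ on $|f|\le 1/(2T)$ and zero elsewhere. Applying Wiener--Khinchin to the cyclostationary process,
\[ \E|E(\tau_k,t)|^2 \;=\; \int |H_{\tau_k}(f)|^2\, \bar S_x(f)\, df \;\le\; \frac{\tau_k^4}{4}\int (2\pi f)^4 \bar S_x(f)\,df \;=\; \frac{\tau_k^4}{4}\,\E|x''(t)|^2. \]
The last step is to evaluate $\E|x''(t)|^2 = T\int_{-1/(2T)}^{1/(2T)} (2\pi f)^4\,df$, which is an explicit constant times $T^{-4}$; multiplying through by $a_k^2$ produces the advertised shape $C\, a_k^2\, (\tau_k/T)^4$.

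The main difficulty I expect lies not in the strategy but in nailing down the constant. Two bookkeeping issues enter: \emph{(i)}~because $x(t)$ is only cyclostationary, one must work with time-averaged autocorrelations (minor but essential), and \emph{(ii)}~the crude pointwise bound $|H_{\tau_k}(f)|^2 \le (2\pi f\tau_k)^4/4$ combined with the sinc PSD yields a leading constant of order $\pi^4/20$, which has the right order of magnitude but is far larger than the stated $0.075$. To reach the precise reported number one presumably has to compute
\[ \int \bigl[2 - 2\cos(2\pi f \tau_k) + (2\pi f \tau_k)^2 - 2(2\pi f \tau_k)\sin(2\pi f \tau_k)\bigr]\, \bar S_x(f)\, df \]
in closed form and then extract the leading $\tau_k^4$ term of its Taylor expansion in $\tau_k/T$, possibly with an additional normalisation convention for the sinc pulse or for the bandwidth that I have not located in the text. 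That constant-chasing is where I anticipate the routine-looking arithmetic to become finicky, while the overall Taylor-remainder-plus-bandlimit argument is straightforward.
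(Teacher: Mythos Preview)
Your argument is the frequency-domain counterpart of the paper's: both reduce the remainder to $\tfrac{\tau_k^4}{4}\,\E|x''|^2$ and then evaluate that second moment. The paper works in the time domain, invoking the Lagrange form $E(\tau_k,t)=\tfrac12 x''(\delta_k)\tau_k^2$ and computing $\E|x''(\delta_k)|^2=T^{-4}\sum_n (g'')^2(\delta_k/T-n)$ via Poisson summation; you do the equivalent spectral integral directly. Your route is a little cleaner---the Lagrange form is awkward for complex-valued $x(t)$ and the intermediate point $\delta_k$ is realisation-dependent---but the structure is identical, and with sinc pulses the process is in fact WSS, so your time-averaging concern is moot.

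On the constant: your normalisation instinct is correct and the exact-integral idea is not. The pointwise bound $|H_{\tau_k}(f)|^2\le(2\pi f\tau_k)^4/4$ already \emph{is} the leading $\tau_k^4$ coefficient of $|H_{\tau_k}(f)|^2$, so expanding the exact integral gives the same $\pi^4/20$ you obtained. The discrepancy comes from the paper's choice $g(t)=\sin(t)/t$ (not $\sin(\pi t)/(\pi t)$), which is bandlimited to $|f|\le 1/(2\pi)$ rather than $1/2$; redoing your PSD computation with that pulse yields $\E|x''|^2=\pi/(5T^4)$ and hence a constant $\pi/20\approx0.157$. The further drop to $0.075$ is an artefact of a Fourier-convention mismatch in the paper's Poisson step: the quoted values $\hat f(0)=\tfrac{1}{5}\sqrt{\pi/2}$ and $\hat f(\pm1)=\tfrac{1}{60}\sqrt{\pi/2}$ belong to the $(2\pi)^{-1/2}$-normalised transform, while the summation formula $\sum_n f(x+n)=\sum_n\hat f(n)e^{2\pi i nx}$ used requires the $e^{-2\pi i x\xi}$ convention (for which in fact $\hat f(0)=\pi/5$ and $\hat f(\pm1)=0$). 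You should therefore not expect any correct bookkeeping to reproduce $0.075$.
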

\begin{proof}
By Taylor's  theorem, there exists a $\delta_k \in (t-\tau_k,t)$ such that
\[E(\tau_k,t)  = \frac{1}{2}   x''(\delta_k) \tau_k^2.\]
Hence \begin{align}
\E[|a_kE(\tau_k,t)|^2 ] &= \frac{1}{4}  a_k^2  \tau_k^4 \E|x''(\delta_k)|^2.
\label{eq:avg}
\end{align} 
Using \eqref{eq:bb1}, we have
\[x''(\delta_k) = \sum_{n=-\infty}^\infty s_n \frac{\d^2 g((\delta_k-nT)/T)}{\d t}.\]
Assuming $\E[s_i]=0$, $\E[|s_i|^2]=1$ and that $s_i$'s are uncorrelated, we have
\begin{align*}
\E|x''(\delta_k)|^2 &=   \sum_{n=-\infty}^\infty \left(\frac{\d^2 g((\delta_k-nT)/T)}{\d t}\right)^2\\
&= \frac{1}{T^4}  \sum_{n=-\infty}^\infty f(\delta_k/T -n),
\end{align*}
where  $(a)$ follows by twice differentiating the  sinc pulse  $g(t)$ $= \sinc(t)$ and 
\[f(x)= \left(\frac{2  \sin(x)}{x^3}-\frac{\sin( x)}{x}-\frac{2  \cos(x)}{x ^2}\right)^2.\] 
%
%
   It can be easily verified that $f(x)\leq\frac{1}{x^2}$ and hence $f(x)$ is absolutely summable.  We therefore have
   \begin{align*}
   \sum_{n=-\infty}^\infty f(x+n)
   &\stackrel{(a)}{=}  \sum_{n=-\infty}^\infty\hat{f}(n)e^{2\pi j n x},
   \end{align*}
   where $(a)$ follows from the Poisson summation formula \cite{folland2013real} and 
   \[\hat{f}(\xi)= \int_{-\infty}^\infty  f(x) e^{-j 2\pi x \xi} \d x,\]
   is the Fourier transform of the function $f(x)$.  Also,
      \begin{align*}
   \hat{f}(0)&= \frac{1}{5}\sqrt{\frac{\pi}{2}},\\
   \hat{f}(-1)=  \hat{f}(1)&=\frac{1}{60}\sqrt{\frac{\pi}{2}},\\
         \hat{f}(k)&=0, |k|\geq 2, k \in \Z.
   \end{align*}
   Hence,
      \begin{align*}
      \sum_{n=-\infty}^\infty f(\delta_k/T -n) &= \frac{1}{5}\sqrt{\frac{\pi}{2}}+ \frac{2}{60}\sqrt{\frac{\pi}{2}}\cos(2\pi \delta_k/T),\\
      &\leq  \frac{1}{5}\sqrt{\frac{\pi}{2}}+ \frac{2}{60}\sqrt{\frac{\pi}{2}} \leq 0.3.
      \end{align*}
   Plugging into \eqref{eq:avg} we obtain the result.
\end{proof}
From the above lemma we observe that the error term depends on the ratio of $\tau_k /T$ and its product with the channel gain $a_k^2$. Hence the total error power due to the Taylor series approximation can be bounded as
\[\E |E_2(t)^2| \leq  (0.075)\sum_{k=0}^M a_k ^2(  \tau_k/T)^4 . \]
Hence we see that the Taylor series approximation is good as long as the path loss $a_k^2$ and the reflector distances ($c/(2\tau_k)$) in relation to the bandwidth $1/T$ are such that $a_k ^2(  \tau_k/T)^4$ is small. 
More insight can be obtained by converting all the terms into distances. The delay $\tau_k$ is given by $d_k/c$, where $d_k/2$ is the distance to the reflector. Assuming a path-loss model $\l(x) =C\|x\|^{-\alpha}$, $\alpha>2$ we have 
\[\E[|a_kE(\tau_k,t)|^2] =\Theta(d_k^{-\alpha+4} ( cT)^{-4} ).\]
So if $\alpha\geq 4$,  the contribution of the reflector decreases with distance irrespective of $T$. For $2\leq \alpha <4$, the error is low as long as $Cd_k^{4-\alpha} \ll (cT)^{-4}$. For a 20 MHz signal $(cT)^{-4}$ is about -47 dB.  Also, we can observe that the error term increases with decreasing signalling time $T$ or equivalently increasing bandwidth.

\begin{figure}[ht]
\centering
\includegraphics[width=0.95\columnwidth]{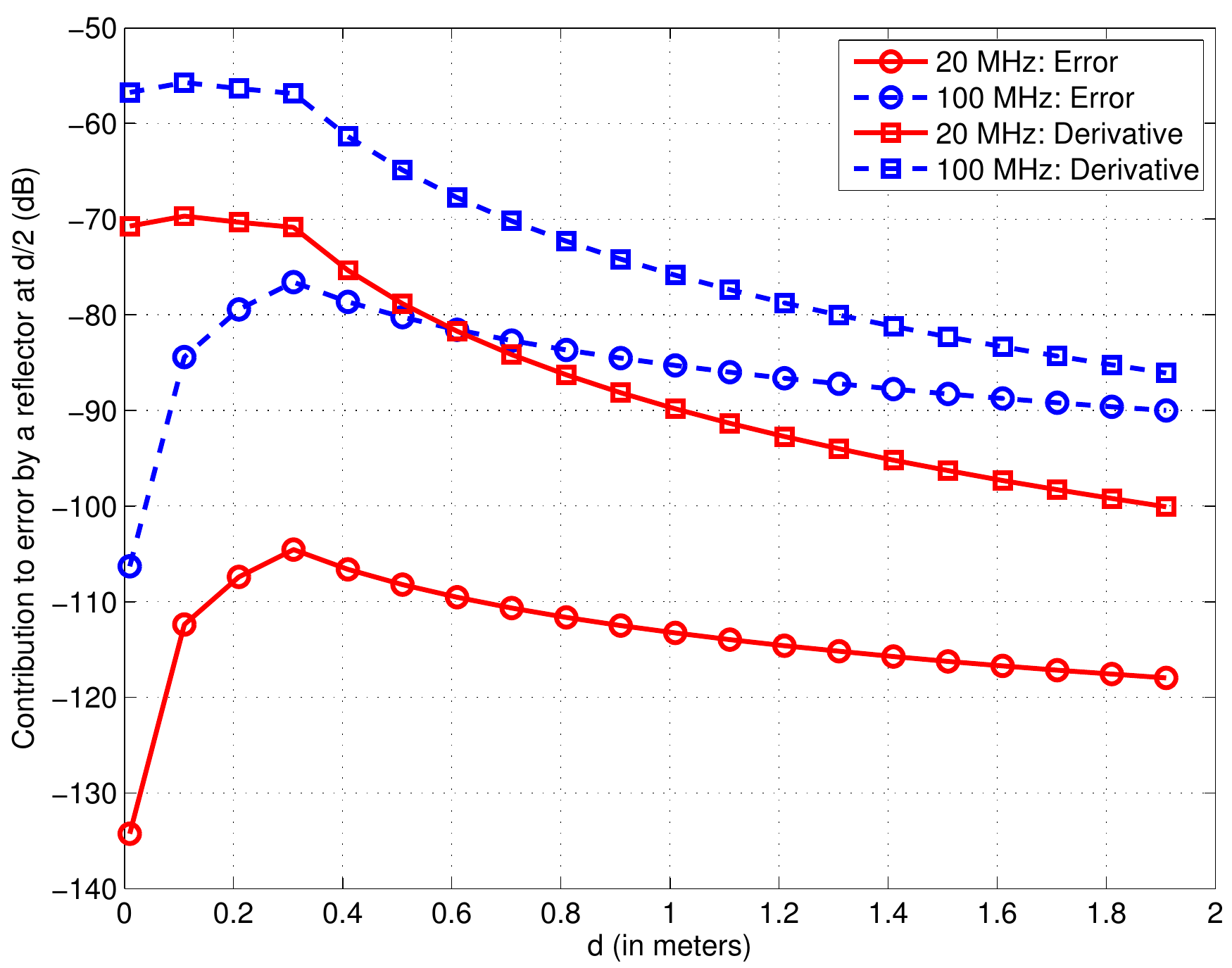}
\caption{Plot of $\l(d)^2 d^{4} ( cT)^{-4}$  with  measured and interpolated $\l(d)$ from Figure~\ref{fig:1}, assuming a $0$ dBm transmit signal.   }
\label{fig:error_dist}
\end{figure}
In  Figure \ref{fig:error_dist}, the contribution of a reflector at a distance $d/2$ towards the error term $|a_kE(\tau_k, t)|^2$ is plotted as a function of the distance $d$. A fitted model for  the measured path loss function in Figure~\ref{fig:1} is used for computing $a_k^2$. So for a transmit signal of 20 MHz bandwidth, we observe that the maximum contribution by any reflector or leakage to the error term is about  -100dB. So if the transmit signal has  a transmit power of 20 dBm, the error power would be about -80 dBm. Even if we have ten such reflections, the worst case error power is about -70 dBm. Observe that this error power is in the range of the typical  received signal power.  As in Lemma \ref{lem:der}, we can show that  the power in the derivative term $\E|a_k\tau_k x'(t)|^2  =\Theta(a_k ^2(  \tau_k/T)^2 )$.
In Figure \ref{fig:error_dist},  we also observe that the derivative term $I_d(t)$ is about  20-30 dB higher than the error term.

\subsection{Higher order approximation}
Similar to the previous subsection, a better approximation of the self-interference signal can be realised by using higher order terms of the Taylor series, \ie,
\begin{align}
I(t) = I_{s}(t) - I_{d}(t)+I_{d_2}(t) +\hdots (-1)^nI_{d_n}(t) +E_{n+1}(t).
\end{align}
Here 
\[I_{d_n}(t)=\re\left[C_n \frac{\d^n x(t)}{\d x^n} e^{j2\pi f_c t}\right],\]
is the self-interference corresponding to the $n^{\text{th}}$ derivative term. $C_n$ represents the accumulated phase and amplitude shifts of the $n^{\text{th}}$ derivative because of the multipath and equals
\[C_n = \sum_{k=1}^M \frac{a_k \tau_k^n}{n!} e^{-j 2\pi f_c \tau_k}.\]
While, higher order derivatives provide better approximation, the overall complexity of the cancellation process increases. Similar to the proof of Lemma \ref{lem:der}, it can easily be proved that \[\E[|E_{n+1}(\tau_k, t)|^2]  = \Theta\left( \sum_{k=0}^M a_k ^2(  \tau_k/T)^{2(n+1)} \right),\]
which shows that the error decreases with increasing $n$. We now look at techniques for cancelling $I_s(t)$ and $I_d(t)$ in the next Section.

\section{Self-interference cancellation architecture}
\label{sec:model}
A significant part of the self-interference signal should be cancelled before the LNA to prevent saturation and non-linearities in various elements of the receiver chain. 
Cancelling a part of the SI signal in the RF domain also helps in suppressing the in-band transmitter noise. 
  So the larger component $I_s(t)$ of the  self-interference term  has to be cancelled before the LNA. The derivative signal  can be cancelled either in the RF domain (before LNA), or in the analog baseband domain (before Analog to Digital Converter (ADC) and after mixer) or in the digital baseband domain (after ADC). These three options lead to three different architectures. However, as a proof of concept and for ease of implementation, we cancel the derivative term in the digital domain.   See Figures \ref{fig:digital_can} and \ref{fig:analog_can} for illustration of the cancellation architectures.


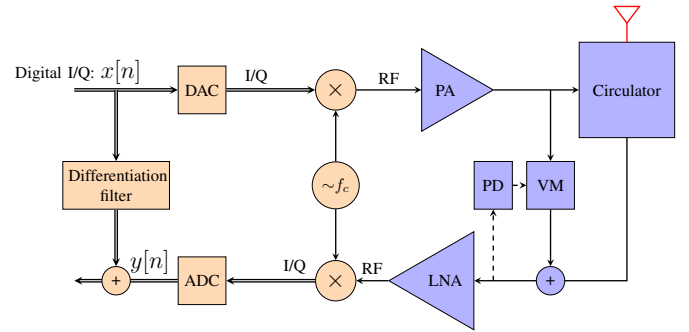
\begin{figure} [!htb]
\center
\resizebox{9cm}{!}{
\begin{tikzpicture}[node distance=0.8cm]
\node (null) {};
\node (DAC) [box1,   minimum width=1cm, right of=null, xshift=2cm] {DAC};
\draw [arrowdouble] (null) -- node[anchor=south, xshift =-1cm] {Digital I/Q: {\Large $x[n]$}}  ( DAC);
\node (mixerT) [circ1,   right of=DAC,   xshift=2cm] {$\mathlarger{\mathlarger{\mathlarger{\times}}}$};
\draw [arrowdouble](DAC)  -- node[anchor=south, xshift =-0.3cm] {I/Q}  (mixerT);
\node (PA) [triag1,   right of=mixerT,   xshift=1.5cm, rotate=30,minimum size=2cm] {\rotatebox{-30}{PA}};
\draw [arrow](mixerT)  -- node[anchor=south] {RF}  (PA);
\node (ANT) [ right of=PA,xshift=2cm]{};
\node(circulator) [box2,   minimum width=2cm, minimum height = 2cm ,right of=PA,  xshift=3cm] {Circulator};
\draw [arrow](PA)  -- node[anchor=south] {}  (circulator);
\draw [color=red,thick, right of=PA,xshift=2.1cm,yshift=0.1cm]   (8.8,0.9)\antenna;
\node [output, right of=PA, xshift=1.4cm] (output) {};
\node [output, left of=DAC, xshift=-1cm] (output2) {};

\node  (VM) [box2,   minimum width=1cm, right of=PA, xshift=1.4cm ,yshift=-2cm] {VM};
\node  (Diff) [box1,   minimum width=1cm, left of=DAC, xshift=-1cm ,yshift=-2cm,text width=2cm] {Differentiation  filter};
\draw [arrow] (output) -- node [name=y] {}(VM);
\draw [arrowdouble] (output2) -- node [name=y] {}(Diff);

\node (fc) [circ1,   right of=DAC, xshift=2cm,  yshift=-2cm] {$\sim$ \newline $ f_c$};
\draw [arrow] (fc) -- node [name=y] {}(mixerT);

\node (nullR)[yshift =-4cm] {};
\node (ADC) [box1,   minimum width=1cm, right of=nullR, xshift=2cm] {ADC};
\draw [arrowdouble] (ADC) -- node[anchor=south, xshift =0.5cm ] { \Large  $y[n]$}  (nullR);
\node (mixerR) [circ1,   right of=ADC,   xshift=2cm] {$\mathlarger{\mathlarger{\mathlarger{\times}}}$};
\draw [arrowdouble](mixerR)  -- node[anchor=south, xshift =.5cm] {I/Q}  (ADC);
\node (LNA) [triag1,   right of=mixerR,   xshift=1.5cm, rotate=-30] {\rotatebox{30}{LNA}};
 \draw [arrow](LNA)  -- node[anchor=south] {RF}  (mixerR);
  \draw [arrow](circulator)  |- node {}  (LNA);

    \node (sumR) [circ2,   right of=LNA,   xshift=1.4cm] {+};
        \node (sumR2) [circ1,   left of=ADC,   xshift=-1cm] {+};
              \node (PD) [box2,   minimum width=0.8cm, right of=VM, xshift=-2cm ,yshift=0cm] {PD};
 \draw [arrow,dashed] (8.9,-4) -- node[anchor=south] [name=y] {}(PD);
    \draw [arrow] (VM) -- node [name=y] {}(sumR);
     \draw [arrow,dashed] (PD) -- node [name=y] {}(VM);
        \draw [arrowdouble] (Diff) -- node [name=y] {}(sumR2);
        \draw [arrow] (fc) -- node [name=y] {}(mixerR);

\end{tikzpicture}
}
\caption{Cancellation architecture. We cancel $I_s(t)$ in the RF domain and $I_d(t)$ in the digital domain. Here VM represents vector modulator for phase shifting and attenuation, while PD represents power detector. The output of the power detector is proportional to the residual power after RF cancellation and is used to adjust the phase and attenuation of the vector modulator.}
\label{fig:digital_can}
\end{figure}

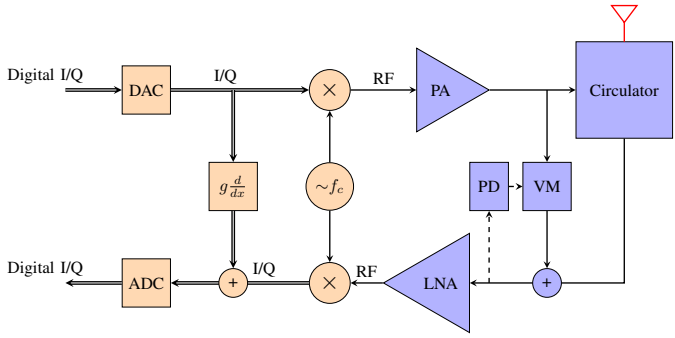
\begin{figure}[!htb]
\center
\resizebox{9cm}{!}{
\begin{tikzpicture}[node distance=0.8cm]
\node (null) {};
\node (DAC) [box1,   minimum width=1cm, right of=null, xshift=1cm] {DAC};
\draw [arrowdouble] (null) -- node[anchor=south, xshift =-1cm] {Digital I/Q}  ( DAC);
\node (mixerT) [circ1,   right of=DAC,   xshift=3cm] {$\mathlarger{\mathlarger{\mathlarger{\times}}}$};
\draw [arrowdouble](DAC)  -- node[anchor=south, xshift =-0.3cm] {I/Q}  (mixerT);
\node (PA) [triag1,   right of=mixerT,   xshift=1.5cm, rotate=30,minimum size=2cm] {\rotatebox{-30}{PA}};
\draw [arrow](mixerT)  -- node[anchor=south] {RF}  (PA);
\node (ANT) [ right of=PA,xshift=2cm]{};
\node(circulator) [box2,   minimum width=2cm, minimum height = 2cm ,right of=PA,  xshift=3cm] {Circulator};
\draw [arrow](PA)  -- node[anchor=south] {}  (circulator);
\draw [color=red,thick, right of=PA,xshift=2.1cm,yshift=0.1cm]   (8.8,0.9)\antenna;
\node [output, right of=PA, xshift=1.4cm] (output) {};
\node [output, right of=DAC, xshift=1cm] (output2) {};

\node  (VM) [box2,   minimum width=1cm, right of=PA, xshift=1.4cm ,yshift=-2cm] {VM};
\node  (Diff) [box1,   minimum width=1cm, right of=DAC, xshift=1cm ,yshift=-2cm] {$g\frac{d}{dx}$};
\draw [arrow] (output) -- node [name=y] {}(VM);
\draw [arrowdouble] (output2) -- node [name=y] {}(Diff);

\node (fc) [circ1,   right of=DAC, xshift=3cm,  yshift=-2cm] {$\sim$ \newline $ f_c$};
\draw [arrow] (fc) -- node [name=y] {}(mixerT);

\node (nullR)[yshift =-4cm] {};
\node (ADC) [box1,   minimum width=1cm, right of=nullR, xshift=1cm] {ADC};
\draw [arrowdouble] (ADC) -- node[anchor=south, xshift =-1cm ] {Digital I/Q}  (nullR);
\node (mixerR) [circ1,   right of=ADC,   xshift=3cm] {$\mathlarger{\mathlarger{\mathlarger{\times}}}$};
\draw [arrowdouble](mixerR)  -- node[anchor=south, xshift =.5cm] {I/Q}  (ADC);
\node (LNA) [triag1,   right of=mixerR,   xshift=1.5cm, rotate=-30] {\rotatebox{30}{LNA}};
 \draw [arrow](LNA)  -- node[anchor=south] {RF}  (mixerR);
  \draw [arrow](circulator)  |- node {}  (LNA);

    \node (sumR) [circ2,   right of=LNA,   xshift=1.4cm] {+};
    \node (PD) [box2,   minimum width=0.8cm, right of=VM, xshift=-2cm ,yshift=0cm] {PD};
    \draw [arrow,dashed] (8.9,-4) -- node[anchor=south] [name=y] {}(PD.south);
        \node (sumR2) [circ1,   right of=ADC,   xshift=1cm] {+};
    \draw [arrow] (VM) -- node [name=y] {}(sumR);
    \draw [arrow,dashed] (PD) -- node [name=y] {}(VM);
        \draw [arrowdouble] (Diff) -- node [name=y] {}(sumR2);
        \draw [arrow] (fc) -- node [name=y] {}(mixerR);
  \end{tikzpicture}}
\caption{Canceling $I_d(t)$ in the analog domain. The derivative canceler is implemented in   the analog domain after down conversion and before the  ADC.  This block is represented as $g\frac{d}{dx}$, where $g$ represents the tunable complex gain.}
\label{fig:analog_can}
\end{figure}

\subsection{Cancellation of the signal term $I_s(t)$}
The signal term $I_s(t)$ in the received self-interference can be rewritten as 
\[I_s(t)= \re[|C_0|  x(t)e^{j2\pi f_c t+j \arg(C_0)} ],\]
where $|C_0|$ represents the absolute value of the complex gain $C_0$ and $\arg$ denotes its angle (argument). So $I_s(t)$ can be obtained by scaling the known transmitted RF signal $y(t)=\sqrt{G_t}\re[x(t)e^{j2\pi f_ct }]$ by $|C_0|$ and phase-shifting the carrier by $\arg(C_0)$. 

The phase of an RF signal (carrier)  can be changed  by using a vector modulator (VM) \cite{razavi1998rf}.  See Figure \ref{fig:VM} for a block level illustration of a vector modulator. A vector modulator changes the phase of the RF carrier and the scales the amplitude of the signal. The output of a vector modulator  with $\re[x(t)e^{j2\pi f_c t}]$ as the input RF signal is  
\[Z(t)=\beta\re[ x(t)e^{j2\pi f_c t+j\theta }],\]
where the phase $\theta$ and the gain (or attenuation) $\beta$ can be  changed by changing the gains of the two arms in the vector modulator.  

To obtain the values of $G_1$ and $G_2$, we employ a \emph{training} period. In the training period, we ensure that there is no received signal at the device (\ie, no clients are sending anything to the device during that short period). Then the signal coming out of port 3 will contain only the self-interference signal $I_s(t)$. The values of $G_1$ and $G_2$ are then estimated based on the output of a power detector which measures the power in the residual self-interference. Once training is done, the values of $G_1$ and $G_2$ are adaptively changed based on the output of the power detector.
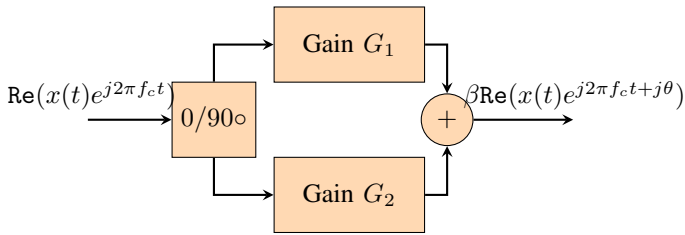
\begin{figure}[!ht]
\center
\begin{tikzpicture}[node distance=0.8cm]
\node (null) {};
\node (ninety) [box1,   minimum width=1cm, right of=null, xshift=1cm] {$0/90{\circ}$};
\node (gain1) [box1, minimum width=2cm, right of=ninety,  yshift=-1cm, xshift=1cm] {Gain $G_2$};
\node (gain2) [box1, minimum width=2cm,  right of=ninety, yshift=1cm, xshift=1cm] {Gain $G_1$};
\node (add) [circ1 ,  right of=gain1, yshift=1cm, xshift=0.5cm] {$+$};
\node (null2) [right of=add,xshift=1cm]{};
\draw [arrow] (null) -- node[anchor=south, xshift =-0.5cm] {$\re(x(t)e^{j2\pi f_c t})$}  (ninety);
\draw [arrow] (ninety) |- (gain1);
\draw [arrow] (ninety) |- (gain2);
\draw [arrow] (gain2) -| (add);
\draw [arrow] (gain1) -| (add);
\draw [arrow] (add);
\draw [arrow] (add) -- node[anchor=south, xshift=0.68cm] {$\beta\re(x(t)e^{j2\pi f_c t+j \theta})$} (null2);
\end{tikzpicture}
\caption{Illustration of a RF  vector modulator that can be used for phase shifting a signal.  The block $0/90^{\circ}$  splits the incoming signal into two RF signals whose RF phase differ by $90^{\circ}$. The output phase  $\theta$ and the gain $\beta$ depend on the gain $G_1$ and gain $G_2$.}
\label{fig:VM}
\end{figure}
The RF input to the power detector is 
\[r(t)=I_s(t) -I_d(t) + \beta\re[ x(t)e^{j2\pi f_c t+j\theta }].\]
Hence the output of the power detector, with an integrating time constant $\nu$ is 
\[P(t) = \frac{1}{\nu}\int_{t}^{t+\nu} |r(a)|^2 \d a.\]
Since, the power in an RF signal is twice that of the corresponding baseband, 
\begin{align*}
P(t) &= \frac{2}{\nu}\int_t^{t+\nu} |x(a)(C_0  + \beta e^{j\theta})-C_1x'(a)|^2 \d a
\end{align*}
\begin{align*}
&= |(C_0  + \beta e^{j\theta})|^2\frac{2}{\nu}\int_t^{t+\nu} |x(a)|^2\d a\\
&+  |C_1|^2\frac{2}{\nu}\int_t^{t+\nu} |x'(a)|^2\d a\\
& -C_1(C_0+\beta e^{j\theta})\frac{4}{\nu}\int_t^{t+\nu}\re[x(a)x'(a)]\d a.
\end{align*}
If  the integrating time constant $\nu$ is much larger than the symbol duration of the training signal $x(t)$, we have 
\begin{align*}
P(t) & \approx  2|(C_0  + \beta e^{j\theta})|^2E_s +2|C_1|^2E_d -4C_1(C_0+\beta e^{j\theta})g(t),
\end{align*}
where $E_s$ and $E_d$ are the powers of the signals $x(t)$  and $x'(t)$ and $g(t)$ is a high frequency signal with twice the frequency content of the signal $x(t)$.
Since $|C_1| \ll |C_0|$, we have
\begin{align*}
P(t) \approx  2|(C_0  + \beta e^{j\theta})|^2E_s.
\end{align*}
 Hence, for a large  integrating time constant, the output of the power detector is proportional to $ 2|(C_0  + \beta e^{j\theta})|^2$ and the minimum value is obtained for $\beta$ and $\theta$ such that $C_0  =-\beta e^{j\theta}$. So in the experimental setup (described later), the value of $\beta$ and $\theta$ are changed by changing the value of $G_1$ and $G_2$ based on the output of the power detector.

\subsection{Cancellation of the derivative term $I_d(t)$}

\subsubsection{Analog domain cancellation}    The derivative term can be cancelled in the analog domain (after downconversion) or in  the digital domain.  However, cancelling more self-interference in the analog domain before the ADC would increase the bit resolution of the digitized received signal.  A differentiator can be easily implemented in the analog domain using a resistor in series with a capacitor across a large gain  amplifier. See Figure \ref{fig:dx}.  The unity gain frequency of the op-amp and the values of $R$ and $C$  determine the bandwidth and the gain of the differentiator.  In  practice, a resistor in series with the capacitor C and a capacitor in parallel with the resistor R are added to reduce gain for high frequency noise. 
 \begin{figure}[ht]
 \centering
\includegraphics[width=0.5\columnwidth]{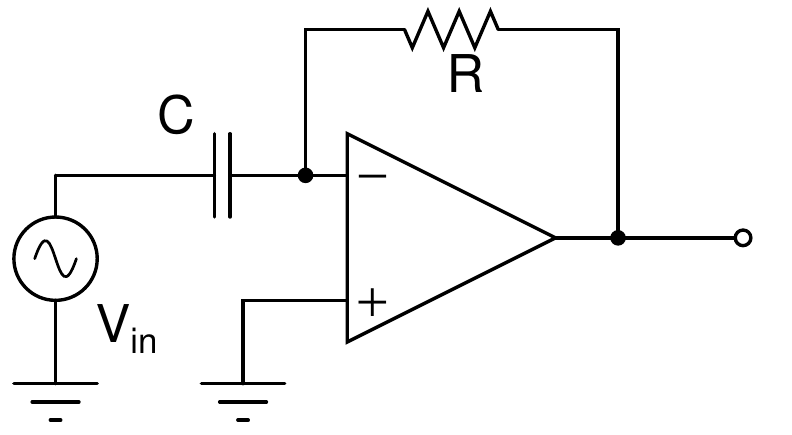}
 \caption{Schematic of an analog differentiator.}
 \label{fig:dx}
\end{figure}
 The real and imaginary parts  (I and Q) of the transmitted baseband  signal have to be scaled jointly so as to  obtain a replica of the derivative term in the self-interference signal.  
 There are significant  advantages of cancelling the derivative term in the analog domain:
\begin{enumerate}
\item Canceling more self-interference before the analog to digital converter would increase the bit resolution of the received signal, thereby improving the effective received signal-to-noise ratio (SNR).
\item Digital cancellation would require realizing the filter $j \omega$, which cannot be realized when the sampling rate is equal to the bandwidth. Realizing an approximate filter would require oversampling the signal thereby increasing the complexity of the ADC.
\end{enumerate}

As shown in Figure \ref{fig:analog_can}, the derivative circuit should be between the ADC, DAC and the mixers. However, such placement is very difficult in off-the shelf equipment (like USRP) and hence, in this paper,  for simplicity of implementation we restrict ourselves to the cancellation of the derivative term in the digital domain.

\subsubsection{Digital domain cancellation}


From \eqref{eqn:Eqn_SelfInterf}, the self-interference signal in the baseband is given by $I(t)= c_0 x(t) -c_1x'(t)+ e_b(t)$.
Theoretically, the RF/analog cancellation should have removed the component $c_0x(t)$. However, in practice, because of the  gain and phase quantization  of the vector modulator and device imperfections, there will be a residual signal component even after RF cancellation. Hence the self-interference  before the ADC is  $r(t) \approx a_0 x(t) -c_1 x'(t)$, 
where for notational simplicity, we have neglected the error term $e_b(t)$. 

A digital domain differentiator can be realised by any filter with response $j\omega$ in the frequency domain. As mentioned previously, this filter cannot be realised if the sampling  rate is equal to the Nyquist rate of the signal. But a good approximation of the derivative can be obtained if the signal is oversampled\footnote{Most receivers oversample the signal for  timing and frequency synchronization.}. 
 A simple three tap digital domain filter that mimics a derivative  is 
\begin{align}
H_{d} =[-1, 0, 1].
\label{eq:der1}
\end{align}
 A better noise-robust nine tap approximation of the derivative filter \cite{snrd} is 
\begin{align}
H_d =[3, -32, 168, -672, 0, 672, -168, 32, -3]/840.
\label{eq:der2}
\end{align}
The frequency response of these filters are plotted in  Figure \ref{fig:Fig10_Der} and it can be  observed that an oversampling factor of $4$ would suffice for both these filters.
\begin{figure}[ht!]
\centering
 \includegraphics[width = 0.95\columnwidth]{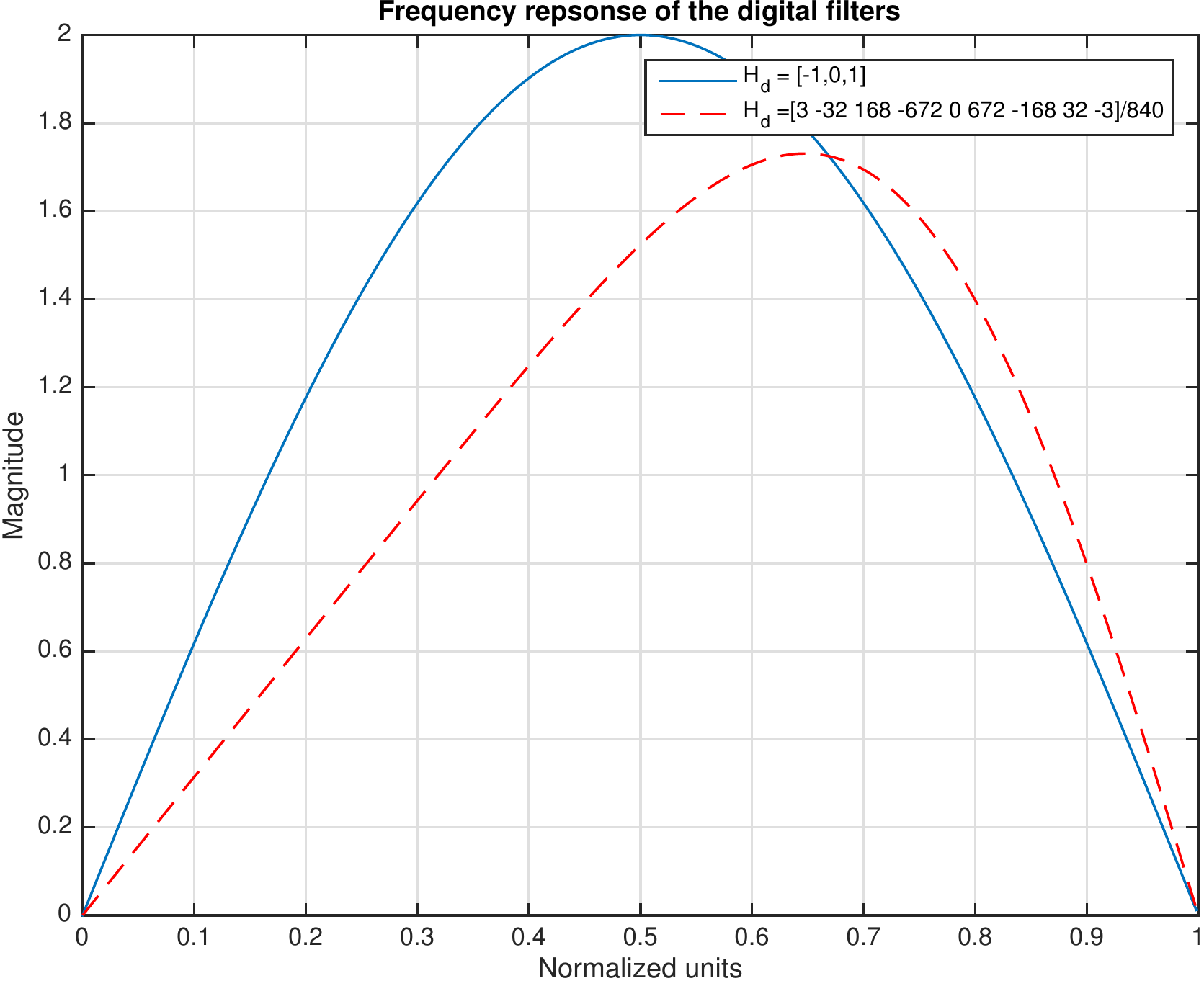}
 \caption{Frequency response of the derivative filters in \eqref{eq:der1} and \eqref{eq:der2}. We observe a good linear approximation till the normalized frequency of $0.3$. }
 \label{fig:Fig10_Der}
\end{figure}
The derivative of the transmitted signal in the digital domain is given by $x'[n]= x[n]\conv H_d[n]$.
 Let $y[i]$ denote the received complex samples in the digital domain. See Figure \ref{fig:digital_can}. In the training phase, the coefficients $a_0$ and $c_1$ are chosen so as to minimize the mean squared error 
$\sum_{i=1}^N| y[i]- a_0x[i]-c_1x'[i]|^2$.
Let $X = [x[1], \hdots, x[N]]^T$,  $X_1= [x'[1], \hdots, x'[N]]^T$ and let $Y= [y[1], \hdots, y[N]]^T$. Then least squares (LS) estimates of $a_0$ and $c_1$ are given by the solutions of 
\begin{align} \underbrace{\left[ 
\begin{array}{cc}
X^HX&X_1^HX\\
X^HX_1&X_1^HX_1
\end{array}
\right]}_{\mathcal{X}}
\left[\begin{array}{c}
\hat{a_0}\\
\hat{c_1}
\end{array}
\right]=
\underbrace{\left[\begin{array}{c}
X^HY\\
X_1^HY
\end{array}
\right]}_{\mathcal{Y}},
\label{eq:mat}
\end{align}
where $X^H$ represents the conjugate transpose of the vector $X$. Note that $\mathcal{X}$ will be a $2\times2$ matrix and $\mathcal{Y}$ will be $2\times1$ vector.
Using these estimates of $\hat{a_0}$ and $\hat{c_1}$, the reconstructed self-interference signal after the training phase is \[\hat{I}[n]=\hat{a_0}x[n] -\hat{c_1}x'[n], n= N+1, N+2, \hdots .\]  $\hat{I}[n]$ is then subtracted from the received signal $y[n]$ (after the training phase) to cancel the self-interference signal.

{\em Complexity:} Observe that the  inverse of the matrix $\mathcal{X}$ can be precomputed and stored. Only the matrix $\mathcal{Y}$ has to be computed based on the received signal. Computing each term of the matrix $\mathcal{Y}$ in \eqref{eq:mat} requires approximately $N$ multiplications and  $N$ additions and obtaining the coefficients would require a $2\times 2$ matrix multiplication with a $2 \times 1$ vector. Hence the computational complexity (complex operations) of the procedure scales as $4N+8$ irrespective of the number of multipaths $M$ in the self-interference channel. The complexity of computing the derivative is $2LN$ with a filter length $L$. Hence the total complexity of the proposed digital cancellation is $(2L+4)N+8$ complex operations.
On the other hand  channel estimation, without any prior model on the channel taps,  assuming a filter length $K$ requires about $2KN+2K^2$ complex computations. In earlier implementations, typically more than $30$ taps are assumed, \ie,  $K\geq 30$.

%
%

We now look at the case, when the second derivative is used in-addition to the first derivative to approximate the delayed signal. In this case, the self-interference signal before the ADC is $I(t)= a_0 x(t) -c_1x'(t)+ c_2x''(t)+e_{2D}(t)$.
The second derivative in the digital domain can be approximated by passing the signal through the filter \cite{snrd} 
\[H_{d_2}=[1, 4, 4, -4, 10, -4, 4, 4, 1 ]/64.\]  
Let  $x''[n]= x[n]\conv H_{d_2}[n]$, 
 and $X_2 = [x''[1], \hdots, x''[N]]^T$. Then the LS estimate of the coefficients are obtained as the solution of
\[ \left[ 
\begin{array}{ccc}
X^HX&X_1^HX&X_2^HX\\
X^HX_1&X_1^HX_1&X_2^HX_1\\
X^HX_2&X_1^HX_2&X_2^HX_2
\end{array}
\right]
\left[\begin{array}{c}
\hat{a_0}\\
\hat{c_1}\\
\hat{c_2}
\end{array}
\right]=
\left[\begin{array}{c}
X^HY\\
X_1^HY \\
X_2^HY
\end{array}
\right].
\]
Once $\hat{a_0}$, $\hat{c_1}$ and $\hat{c_2}$ have been obtained, the self-interference signal can be reconstructed and subtracted from the received signal.

 \section{Experimental results}
 \label{sec:exp}
 In this Section, we provide experimental results to validate the channel model and the derivative based cancellation.  We demonstrate the ability of the proposed derivative based architecture to suppress the self-interference signal and we provide quantitative measure in terms of cancellation for the same.  
  \subsection{Experimental setup}
  Our experimental setup is shown in Fig.~\ref{fig:expt_setup}.  We use  a  National Instruments (NI) PXIe based software defined radio (NI5791) for transmission and reception. 
The maximum transmit power possible in NI5791 is 5 dBm and we use an external power amplifier (PA) (Skyworks SE2576L) at the transmitter. We use a shared antenna architecture wherein the same antenna is used for transmission and reception. The isolation between transmit and receive chain is provided by a circulator (Pasternack PE8401). This circulator provides 18 dB  of isolation between port 1 and port 3. The transmit signal is fed into port 1 and the antenna is connected to port 2. The signal from port 3 will therefore contain the received signal as well as the self-interference signal. Two copies of the transmit signal from PA are obtained using a directional coupler (Mini-Circuits ZHDC-16-63-S+), wherein one is connected to the input port of the circulator and the other is used as an input to the vector modulator (Hittite HMC631LP3).  The directional coupler allows for tapping a copy of a signal with minimal loss in the mainline, thus the output at the coupled port is much lower in power\footnote{A lower power at the input of the VM is desirable since the P1 dB of the VM that we use is 21 dBm and a lower power at its input prevents any significant non-linearity at the output of the VM.}. The VM also introduces an attenuation and thus the power at its output  may be insufficient to suppress the self-interference. The output of VM is passed through an amplifier (Mini-Circuits ZX60-P33ULN+) in order to recover this loss in power. 
   The self-interference signal at port 3 of the circulator is comprised of the transmit signal leaked through the circulator and multiple reflected copies of the transmit signal received by the antenna. 
\begin{figure}[!htb]
\centering
\includegraphics[width=3.5in]{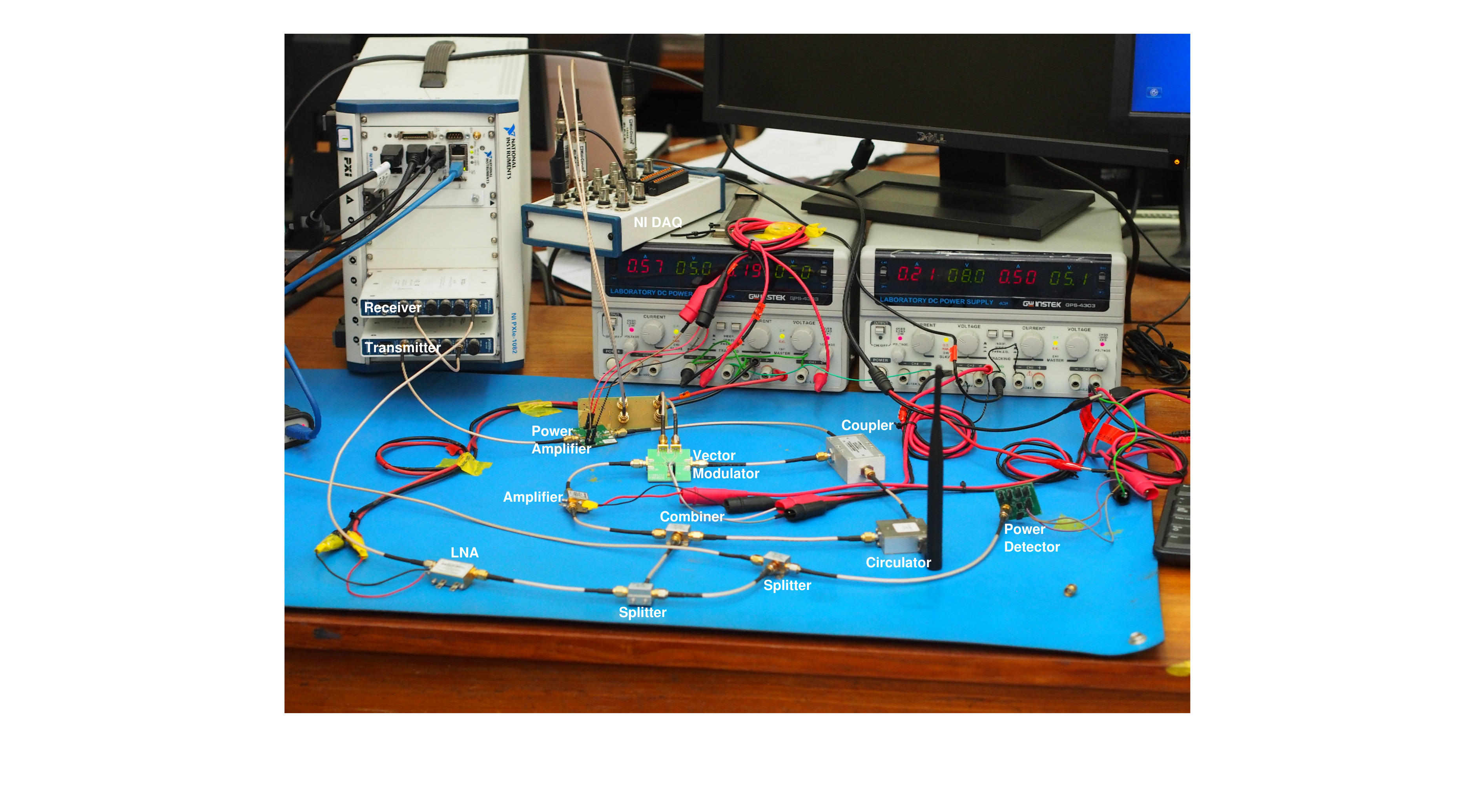}
\caption{Experiment Setup}
\label{fig:expt_setup}
\end{figure}
 
 As mentioned earlier, the  signal  component ($I_s(t)$) of the self-interference is cancelled in  the RF domain.  The vector modulator  is used to  adapt the gain and phase of the tapped transmitted signal and match it to the  $I_s(t)$ component of the self-interference. The gain and phase of the VM are controlled by two DC voltages  generated by an NI Data Acquisition Device (DAQ)\footnote{ Consists of 16-bit analog-to-digital converters (ADC) and 16-bit digital-to-analog converters (DAC) controllable by a desktop computer}. 
  The output of the VM and the self-interference signal (from the receive port of the circulator) are summed by a power combiner (Mini-Circuits ZX10-2-232-S+). A part of this summed signal is fed to a true RMS power detector (PD) (Hittite HMC1020LP4ETR) via a power splitter to observe the power in the residual signal. The PD generates a DC voltage proportional to the input power. This voltage is sampled by the NI DAQ. The optimal DC control voltages of the VM are found by an adaptive search that minimizes the residual self-interference power. 

The residual signal after the combiner is fed to the NI5791 receiver\footnote{The minimum RF power at the input of 5791, that induces full-scale swing at the ADC is -27 dBm. Since the residual self-interference signal is much lower in power, we use an (Mini-Circuits ZX60-242GLN-S+) before the NI5791 module, to prevent effect of quantization noise.}.
 The received samples comprise of the signal term and the derivative term. A part of these samples are training symbols. They are processed offline to obtain $\hat{a_0}$, $\hat{c_1}$ and $\hat{c_2}$. These estimated parameters are then used to reconstruct and cancel self-interference for the remaining samples. ((Non-linear cancellation of the third and fifth harmonic is also used to mitigate the non-linear effects of the PA.))
 
 We obtain the cancellation results for  OFDM and single-carrier modulated waveforms.
\subsubsection{OFDM}
We consider an OFDM signal with 1024 subcarriers of which  620 are useful subcarriers (the rest are nulled out at the DC and at the edge of the band). 
At the receiver we use an oversampling factor of 4. The maximum sampling rate that can be practically achieved using PXIe is 80 MS/s. Hence with an oversampling factor of 4, the maximum bandwidth of OFDM signal that can be transmitted is 20 MHz. The P1dB of the PA is 32 dBm.  The measured PAPR (Peak to Average Power Ratio) of the  transmitted OFDM  waveform was 13 dB and hence the maximum average transmit power was restricted to 19 dBm to avoid severe non-linearities.
The spectrum of  a 20 MHz OFDM signal that is used is plotted in  Figure \ref{fig:Fig_spectrum_deriv_ofdm}.
 
\subsubsection{Single-carrier}  
A 4-QAM single-carrier signal was also used for the experiments.  An RRC pulse shaping filter with roll-off factor 0.3 was used. The PAPR of the signal was measured to be 4 dB which is about 9 dB lower than that of OFDM.  Hence, with the same PA, the single carrier can be transmitted at higher power than OFDM without PA saturation. 
 
We use 2.395 GHz as the center frequency for all the experiments.  This was done mainly to avoid interference from the ISM band.

 \subsection{Results and discussion}
%
Following the standard convention in literature, we define cancellation to be ratio of power of the SI signal after cancellation to the power of the SI signal before cancellation. The ratio is expressed in dB. Note that the power of the SI signal before cancellation is the same as transmit power at the antenna port (port 2) of the circulator. 

 In Figure  \ref{fig:Fig_spectrum_deriv_ofdm}, the spectrum of the self-interference signal is plotted when an OFDM signal is transmitted at 4 dBm (at port 2 of the circulator). In the same figure, the residual self-interference is plotted after analog cancellation. About 54 dB of self-interference was cancelled in the analog domain.   More importantly, the linear slope in the residual self-interference spectrum indicates that the residual signal is dominated by the derivative component $I_d(t)$, thus verifying the derivative approximation and in particular \eqref{eqn:Eqn_SelfInterf}.
\begin{figure}[!htb]
\centering
\includegraphics[width=0.95\columnwidth]{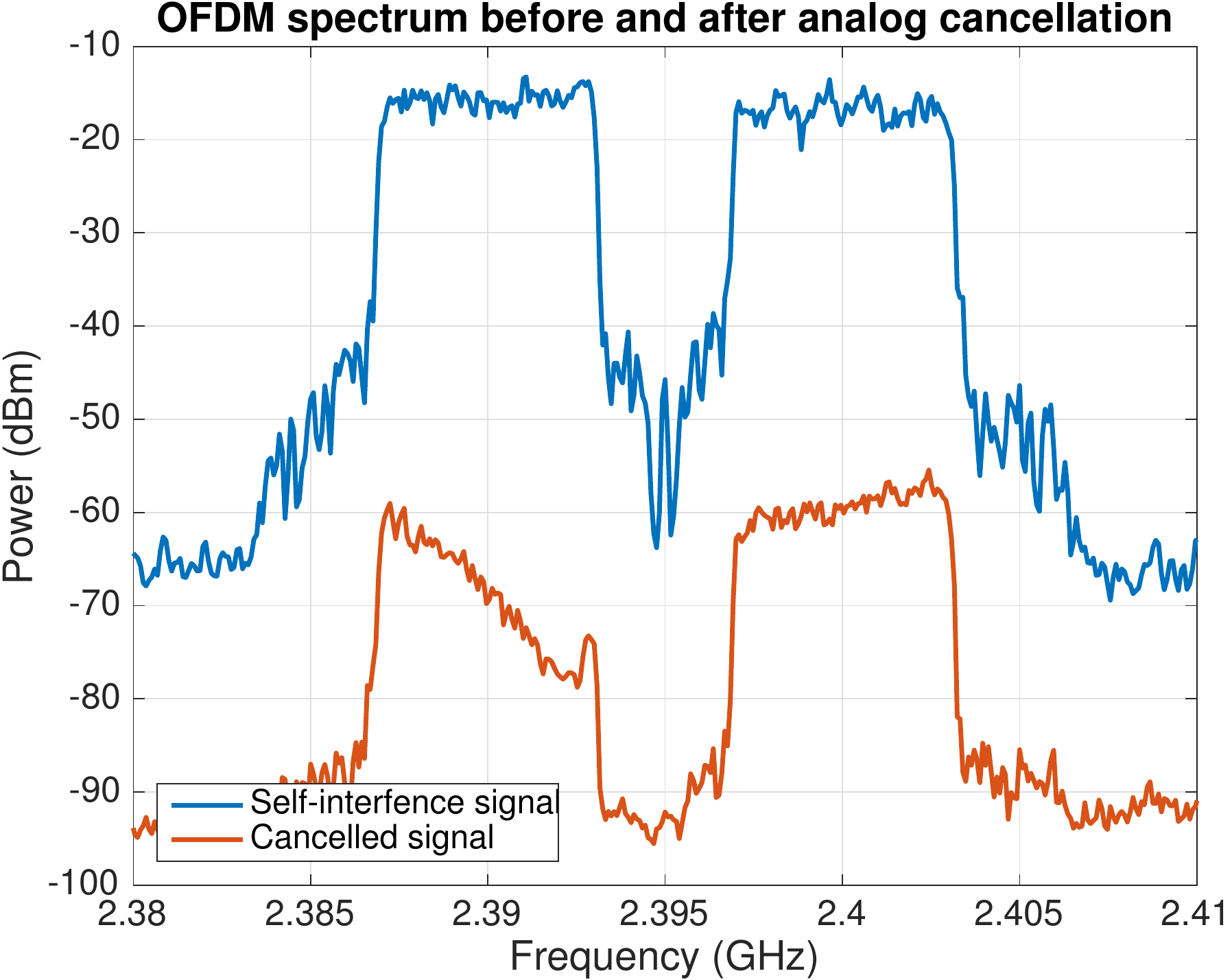}
\caption{ The spectrum of the self-interference signal of a 20 MHz OFDM transmission. Also, the  self-interference spectrum  after analog cancellation is  plotted. The transmit power (at the antenna port) is 4 dBm and the analog cancellation is about 54 dB. The linear slope in the residual self-interference indicates a derivative component. }
\label{fig:Fig_spectrum_deriv_ofdm}
\end{figure}
\begin{figure}[!htb]
\centering
\includegraphics[width=0.95\columnwidth]{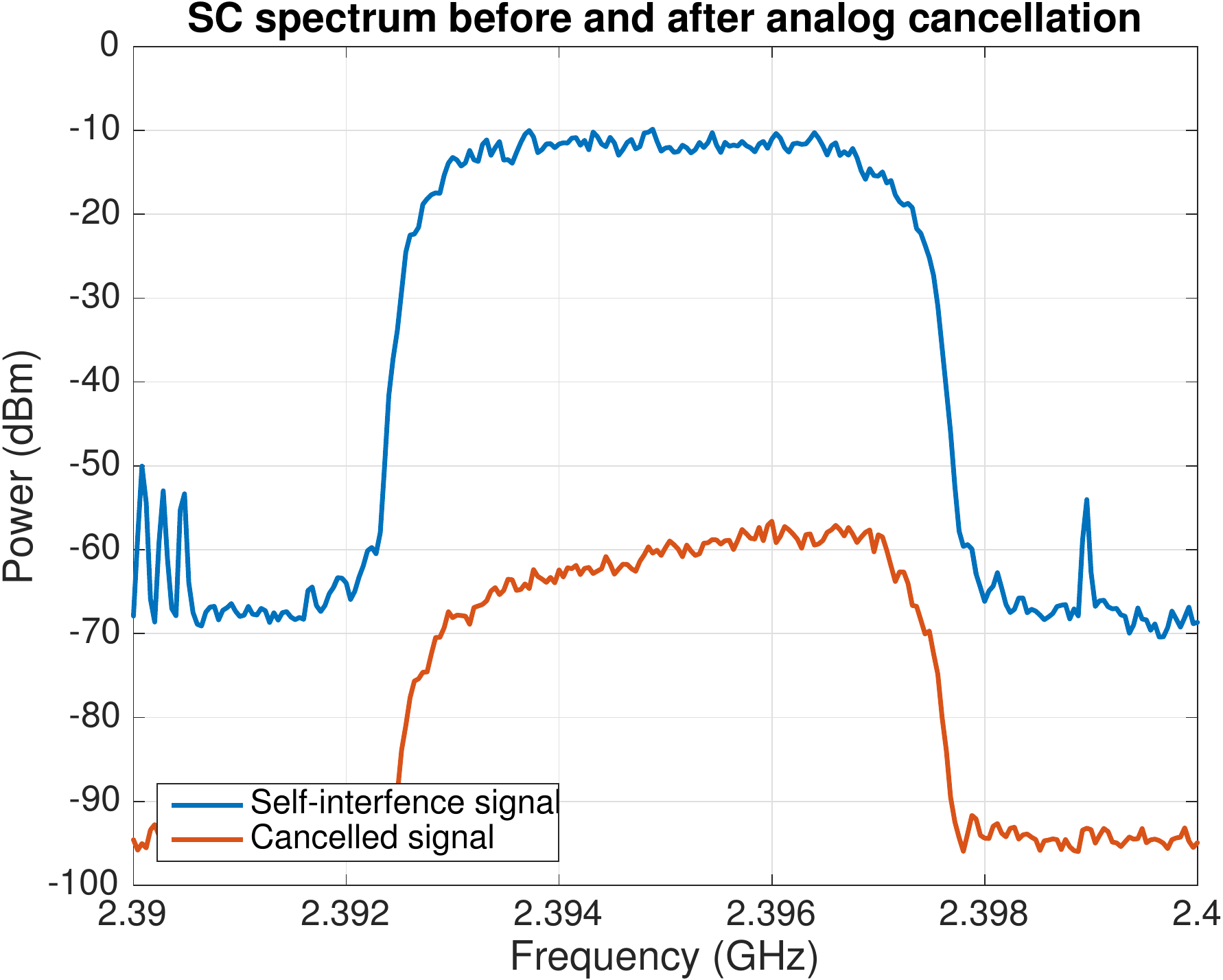}
\caption{The spectrum of the self-interference and the residual signal of a 10 MHz 4-QAM single carrier transmission.  The transmit power (at the antenna port) is 4 dBm and the analog cancellation is about 57 dB. The linear slope in the residual self-interference indicates a derivative component.}
\label{fig:Fig_spectrum_deriv_sc}
\end{figure}
In  Figure \ref{fig:Fig_spectrum_deriv_sc}, the spectra of the  self-interference and the cancelled signal (57 dB analog cancellation) are plotted when a single-carrier signal is transmitted. As in the OFDM signal, the residual self-interference exhibits a large derivative component. 
\begin{figure}[!htb]
\centering
\includegraphics[width = 0.95 \columnwidth]{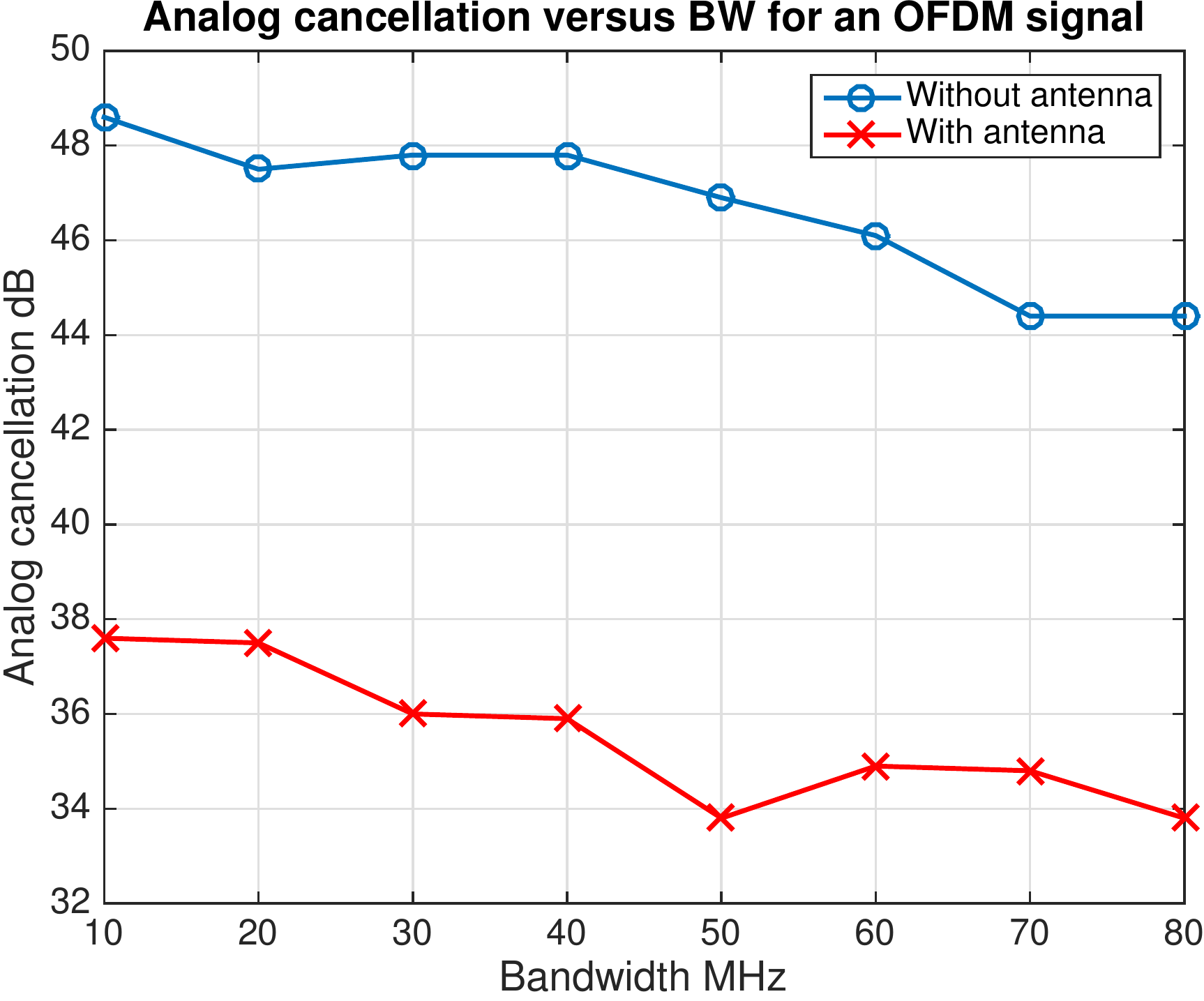}
\caption{Analog cancellation versus  transmit BW for an OFDM signal with and without antenna.  In the second case (without antenna), the antenna port is terminated by a 50 $\Omega$ terminator. }
\label{fig:Fig_cncln_BW}
\end{figure}

In Figure \ref{fig:Fig_cncln_BW}, the analog cancellation\footnote{The reported analog cancellation also includes the 18 dB isolation of the circulator.} is plotted as a function of the signal bandwidth. We observe that the analog cancellation decreases with increasing bandwidth. This is because the derivative component in the residual self-interference signal increases with increasing bandwidth. Since analog cancellation only removes $I_s(t)$, the residual power increases with increasing bandwidth, and thus lowering the  analog cancellation. The top curve in the plot corresponds to the case when the antenna port was terminated by a 50 $\Omega$ terminator, while the bottom curve corresponds to measurements with an antenna. In the case of 50 $\Omega$ termination, the self-interference multi-path  is primarily through the circulator, while with antenna, there will be multiple paths due to reflections too. In addition, the characteristic impedance of the antenna will not be as close to 50 $\Omega$ as a terminator. This impedance mismatch causes RF signals to get reflected back from the antenna (instead of getting transmitted). Because of these two effects, the aggregate  power in the derivative component increases in the case of antenna. This reduces causes the reduction in analog cancellation.

\begin{figure}[!htb]
\centering
\includegraphics[width = 0.95 \columnwidth]{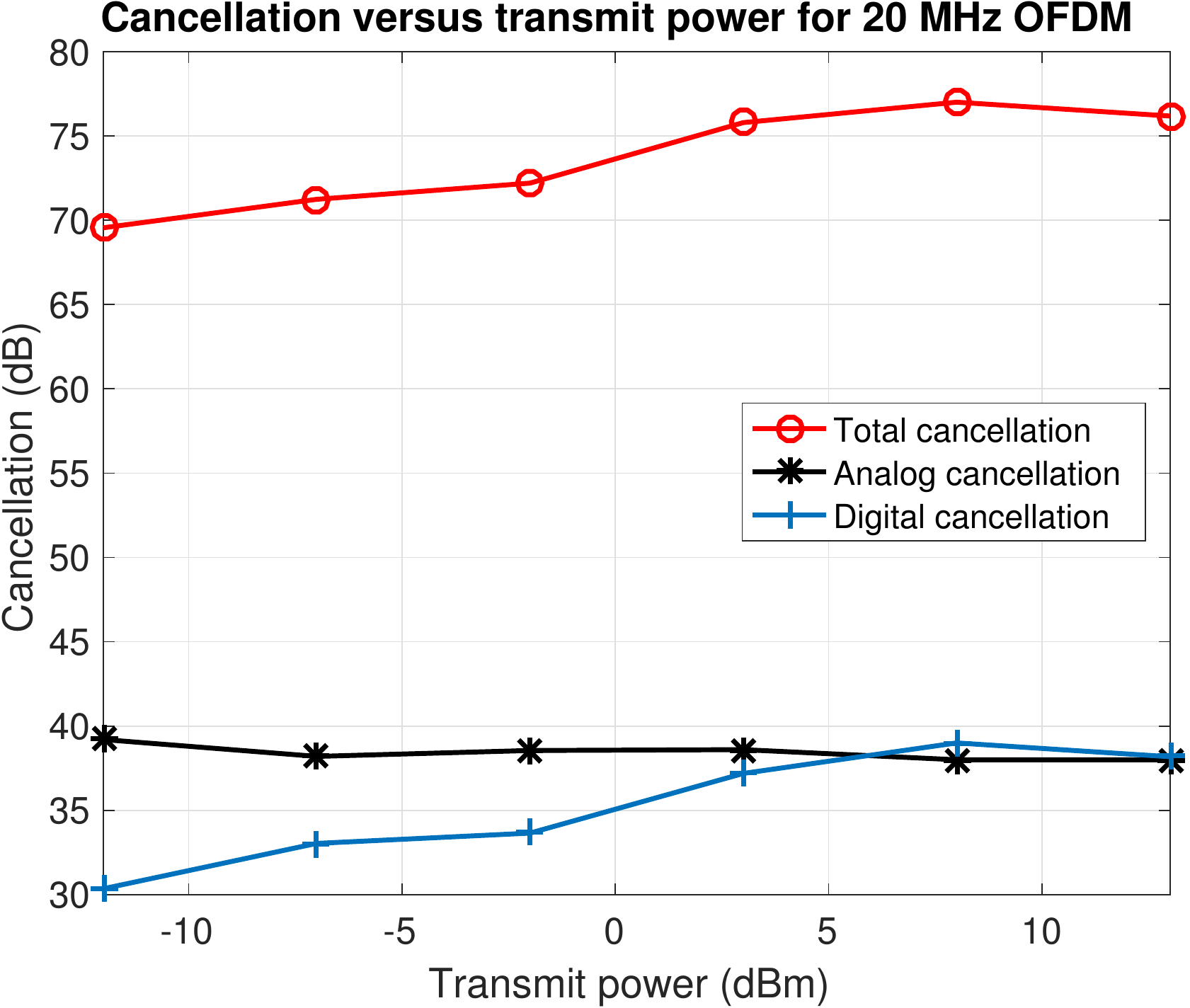}
\caption{Analog and digital cancellation versus transmit power for a 20 MHz OFDM signal with antenna.}
\label{fig:Fig_cncln_ofdm_term_total}
\end{figure}

In Figure \ref{fig:Fig_cncln_ofdm_term_total}, the analog and digital cancellation are plotted as a function of the transmit power. We observe that the analog cancellation is almost constant with respect to increasing transmit power. This is expected since the analog cancellation does not depend on the signal SNR and depends only on the resolution of the phase and amplitude of the VM, which are fixed. The digital cancellation  is increasing with the transmit power till about 5 dBm input power after which it reduces. This is mainly because of the power amplifier non-linearities. 
 \begin{figure}[!htb]
\centering
\includegraphics[width = 0.95 \columnwidth]{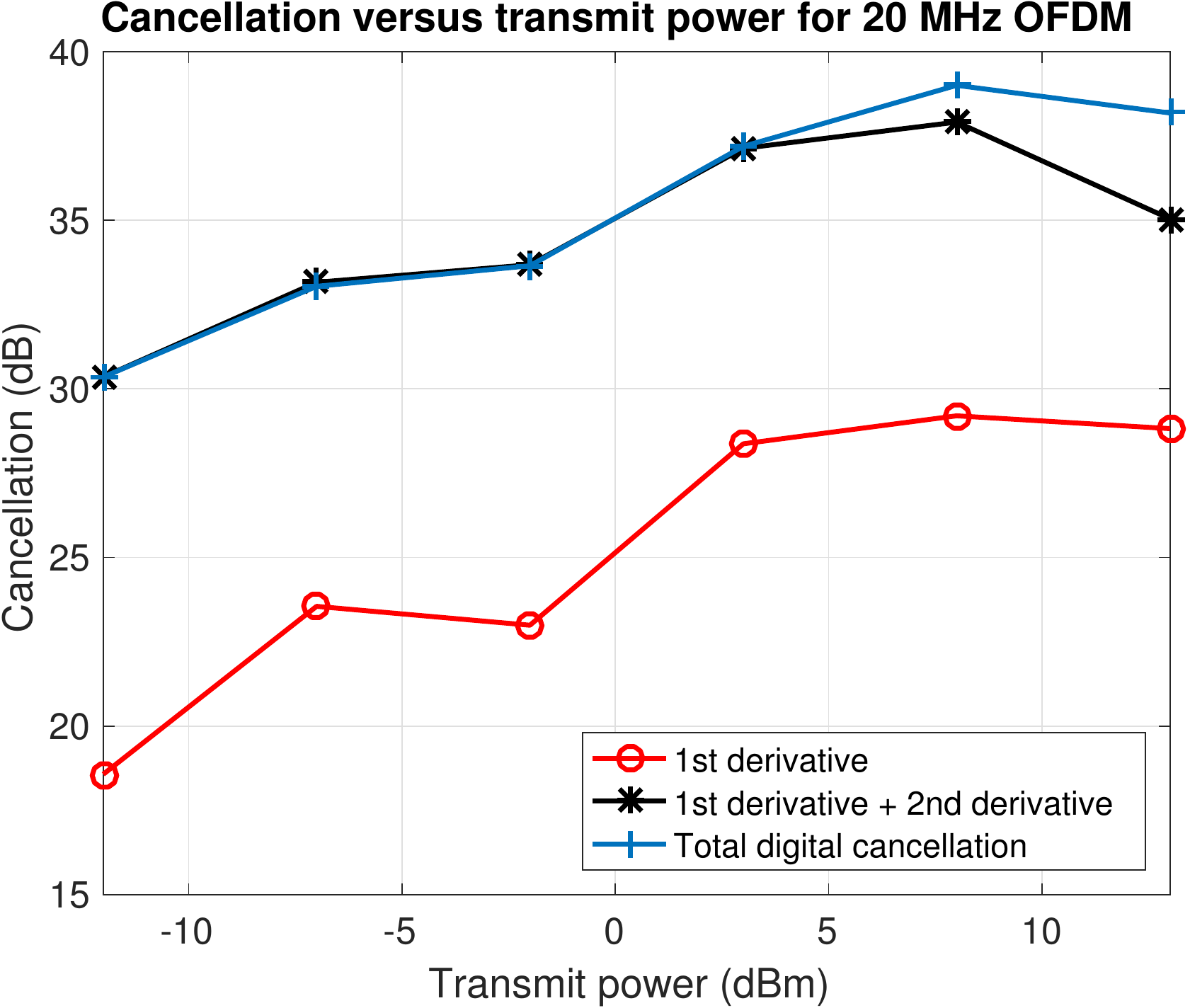}
\caption{Split-up of the digital cancellation versus transmit power for a 20 MHz OFDM signal with antenna.}
\label{fig:Fig_cncln_ofdm_term_split}
\end{figure}
As mentioned earlier, the digital cancellation consists of removing the signal, the derivative and the second order derivative components.  In Figure \ref{fig:Fig_cncln_ofdm_term_split}, this split-up is provided as a function of the transmit power. We see that the first-derivative cancellation provides the maximum cancellation. However, the second-derivative also gives about 5-6 dB of cancellation. The self-interference after analog cancellation is  $I(t)=I_s(t) + I_d(t)$, \ie, a sum of the the signal and the derivative terms. This signal is received in the digital domain after sampling by the ADC. Since the initial phase of the sampling time cannot be controlled, the received self-interference  in the digital domain  is $I(nT+\delta) =I_s(nT+\delta)+ I_d(nT+\delta)$, $n=1, 2, \hdots$, where $T$ is the sampling duration and $0\leq \delta \leq T$.  However we only have access to the transmitted signal $x(nT)$. Since $\delta$ is small, $I_s(nT+\delta)= a_0x(nT+\delta)$ can be approximated (after appropriate scaling by $a_0$) by $x(nT)$ and $x'(nT)$. Similarly $I_d(nT+\delta) = c_1x'(nT+\delta)$ can be approximated by $x'(nT)$ and the second derivative $x''(nT)$.  Hence using the second derivative improves the overall cancellation. 

\begin{figure}[!htb]
\centering
\includegraphics[width = 0.95 \columnwidth]{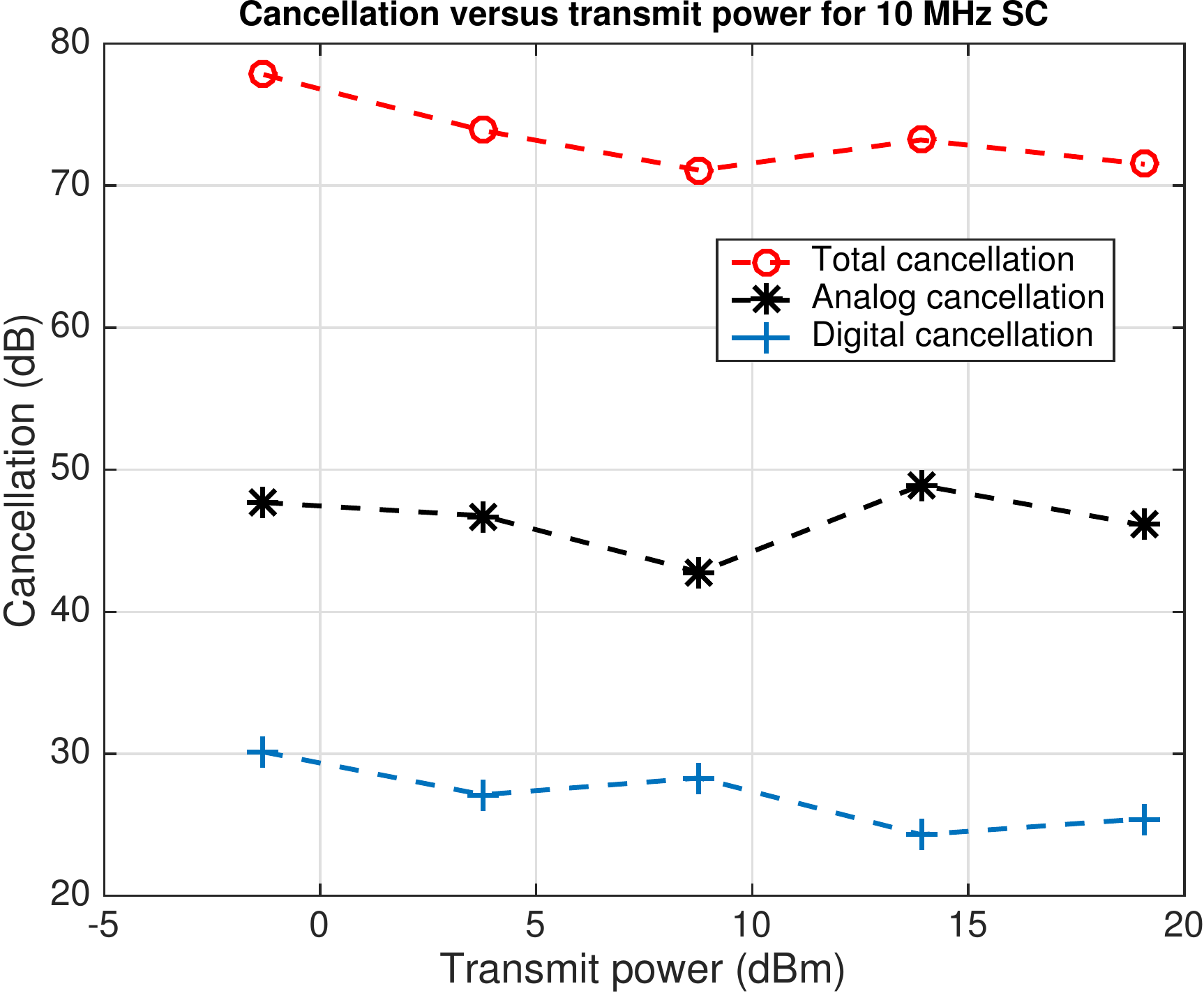}
\caption{Cancellation vs Transmit power for different  bandwidths with single carrier as transmit waveform. These results were obtained with port 2 of the circulator connected to an antenna.}
\label{fig:Fig_cncln_sc_ant}
\end{figure}

For a single carrier transmission, the cancellation is plotted as a function of the transmitted power in Fig.~\ref{fig:Fig_cncln_sc_ant}. Overall, we observe about 75 dBm cancellation for both OFDM and single-carrier waveforms. Thus the overall cancellation is impervious to the transmitted waveform. 

\section{Conclusion}
\label{sec:con}
Robust self-interference cancellation is critical to realising full-duplex capable wireless nodes. However a major impediment to  self-interference cancellation is estimation of the multi-path channel through which the transmitted signal reaches the  shared antenna. A part of the channel has to be estimated in the analog (RF) domain, and used for self-interference cancellation before the LNA, while the remaining part of the channel  has to be estimated in either the analog baseband or in the digital domain.  In the current literature, the channel is modelled as an $M$-tap delay-line filter and the filter coefficients are estimated in the RF and digital domain. In all the works, there was no prior  knowledge on $M$, and  hence a large number of taps are assumed.

In this work, using Talyor series approximation, we reduce the dimensionality of the parameter space to two (or three). In particular, we show that the self-interference can be modelled as a linear combination of the original signal and its derivatives. We propose a new self-interference cancellation architecture that utilises the linearized channel model. The self-interference model, and in particular the presence of the derivative component of the signal  is verified by experiments. About 75 dB of cancellation (without non-linearity cancellation) is observed for a 20 MHz wideband signal. With the linearization technique for canceling SI multipaths, much smaller form-factors could be realized, extending the use of FD capability to not just base-station or access points, but also to end user devices.
 
\section{Acknowledgement}
 The partial support of the Department of Electronics and Information Technology (DeitY), India through the 5G project is gratefully acknowledged. 

\bibliographystyle{IEEEtran}
\bibliography{FD}

\end{document}